\theoremstyle{plain}
\newtheorem{theorem}{Theorem}[section]
\newtheorem{definition}{Definition}[section]
\newtheorem{lemma}{Lemma}[section]
\begin{document}
\title{LiePrune: Lie Group and Quantum Geometric Dual Representation for One-Shot Structured Pruning of Quantum Neural Networks}

\author[1]{Haijian Shao\thanks{Corresponding author: \href{mailto:jsj_shj@just.edu.cn}{jsj\_shj@just.edu.cn}}}
\author[1]{Bowen Yang}
\author[1]{Wei Liu}
\author[1]{Xing Deng}
\author[2]{Yingtao Jiang}
\affil[1]{School of Computer, Jiangsu University of Science and Technology, Zhenjiang 212003, China}
\affil[2]{Department of Electrical and Computer Engineering, University of Nevada, Las Vegas, 89115, USA}

\maketitle

\begin{abstract}
		Quantum neural networks (QNNs) and parameterized quantum circuits (PQCs) are key building blocks for near-term quantum machine learning. However, their scalability is constrained by excessive parameters, barren plateaus, and hardware limitations. We propose LiePrune, the first mathematically grounded one-shot structured pruning framework for QNNs that leverages Lie group structure and quantum geometric information. Each gate is jointly represented in a Lie group--Lie algebra dual space and a quantum geometric feature space, enabling principled redundancy detection and aggressive compression.
		
		Experiments on quantum classification (MNIST, FashionMNIST), quantum generative modeling (Bars-and-Stripes), and quantum chemistry (LiH VQE) show that LiePrune achieves over $10\times$ compression with negligible or even improved task performance, while providing provable guarantees on redundancy detection, functional approximation, and computational complexity.
\end{abstract}
	
	\section{Introduction}
	
	Quantum neural networks (QNNs) and parameterized quantum circuits (PQCs) have emerged as promising tools for near-term quantum machine learning and variational algorithms. By encoding data into quantum states and learning parameters of quantum gates, QNNs can in principle exploit quantum superposition and entanglement to achieve expressive models with potentially fewer parameters than classical deep networks.
	
	However, practical deployment of QNNs faces several critical challenges. First, over-parameterization leads to redundant gates and highly expressive ansatzes that are difficult to train, often suffering from barren plateaus where gradients vanish exponentially with system size. Second, limited qubit counts, gate fidelities, and coherence times on noisy intermediate-scale quantum (NISQ) devices severely constrain the depth and width of realizable circuits. Third, classical-quantum co-design for edge or embedded devices requires real-time or near-real-time compression of quantum circuits under tight latency and memory budgets.
	
	Classical deep learning has developed a rich toolbox for pruning and compressing over-parameterized networks, including unstructured magnitude pruning, structured filter pruning, low-rank factorization, and neural architecture search. Yet, directly adapting these techniques to QNNs is non-trivial. Quantum gates live on compact Lie groups such as $\mathrm{SU}(2^n)$ and must preserve unitarity by construction; naively pruning parameters can easily break group structure and physical implementability. Moreover, quantum geometric properties such as curvature of the loss landscape and Fubini–Study distances between quantum states play a crucial role in trainability and expressivity, but are typically ignored in classical pruning formulations.
	
	In this work, we propose LiePrune, a one-shot structured pruning framework specifically tailored to QNNs, which exploits the dual representation of quantum gates in both Lie group/algebra space and quantum geometric space. Our key contributions are:
	\begin{itemize}
		\item a Lie group and quantum geometric dual representation of quantum gates, enabling principled identification of redundant gates within closed Lie subgroups,
		\item a redundancy graph construction guided by Fubini–Study distances and Lie-algebra proximity, with provable completeness guarantees under mild assumptions,
		\item a one-shot merging rule based on Lie algebra addition and sensitivity-aware weighting, with a rigorous bound on the induced functional approximation error, and
		\item a complexity analysis showing that LiePrune scales linearly in the number of gates under a bounded local-degree assumption, matching real-time constraints of edge quantum devices.
	\end{itemize}
	
	Through extensive experiments on quantum classification, generative modeling, and variational quantum eigensolvers (VQE), we demonstrate that LiePrune can remove more than $90\%$ of trainable parameters while preserving or even slightly improving performance—an effective quantum analogue of structured pruning in classical deep learning.

	\section{Related Work}

	Structured pruning in classical deep learning (DL) has established rich methods, including magnitude pruning and Lottery Ticket Hypothesis \cite{frankle2018lottery}. In quantum regimes, heuristic approaches remove parameters in parameterized quantum circuits (PQCs) based on magnitude or Hilbert-Schmidt distance thresholds \cite{koczor2024probabilistic,noori2025latent}. These often lack guarantees on preserving the circuit's unitary structure.
	
	Recent works exploit the Lie algebraic structures and symmetries of PQCs. Notably, Larocca et al. \cite{goh2025lie} proved that certain gate sequences lie in lower-dimensional submanifolds, revealing inherent redundancy for compression. This provides a foundational theoretical insight into the overparametrization and compressibility of QNNs.

	Most existing quantum compression algorithms are gradient-based or iterative, requiring substantial training resources. For instance, analyzing barren plateaus in QCNNs relies on gradient-based training \cite{pesah2021absence}, while adaptive parameter elimination needs iterative tuning \cite{ohno2024adaptive}.

	Our work, LiePrune, is the first to translate Lie group insights into an efficient one-shot structured pruning framework. Unlike iterative methods, LiePrune requires no training and provides exact elimination rules with strict error bounds. It achieves provable compression with linear complexity under bounded local-degree assumptions.

	\section{Lie Group and Quantum Geometric Dual Representation}
\label{sec:dual_representation}
	
	\subsection{Lie Group Representation of Unitary Gates}
	Any single-qubit rotation gate $R_{\hat{n}}(\theta) \in \mathrm{SU}(2)$ can be written via the exponential map:
	\[
	R_{\hat{n}}(\theta) = \exp\left(-i \frac{\theta}{2} (\hat{n}\cdot\boldsymbol{\sigma})\right),
	\]
	where $\boldsymbol{\sigma} = (X,Y,Z)$ are Pauli matrices forming a basis of the Lie algebra $\mathfrak{su}(2)$.
	
	Two-qubit entangling gates such as CNOT, although not in SU(2), can be embedded into higher-dimensional special unitary groups $\mathrm{SU}(2^n)$ acting on multiple qubits. More generally, any gate $U \in \mathrm{SU}(2^n)$ can be written as the exponential of some Lie-algebra element,
\[
U = \exp(X_U), \qquad X_U \in \mathfrak{su}(2^n).
\]
The exponential map $\exp\colon\mathfrak{su}(2^n)\to\mathrm{SU}(2^n)$ is surjective but not injective, so $X_U$ is not mathematically unique. Throughout this work we fix the \emph{principal logarithm} of $U$, i.e.\ we choose $X_U$ such that its eigenvalues lie in $(-\pi,\pi]$, which makes $X_U$ a well-defined representative of the equivalence class of generators of $U$.
	
\subsection{Quantum Geometric Features}
For two pure states $|\phi\rangle$ and $|\psi\rangle$ in projective Hilbert space $\mathbb{P}(\mathcal{H})$, the Fubini--Study distance is
\[
d_{\mathrm{FS}}(|\phi\rangle,|\psi\rangle) = \arccos\bigl(|\langle \phi | \psi \rangle|\bigr).
\]
Given a fixed reference state $|\psi_0\rangle$, the action of a gate $O_i$ induces the state $|\psi_i\rangle = O_i|\psi_0\rangle$ and its displacement from the identity gate is quantified by
\[
d_{\mathrm{FS}}(O_i;|\psi_0\rangle) = d_{\mathrm{FS}}(|\psi_i\rangle,|\psi_0\rangle) =
 \arccos\bigl(|\langle \psi_0 | O_i | \psi_0 \rangle|\bigr).
\]
For a pair of gates we similarly define the gate-dependent distance
\begin{equation*}
\begin{split}
d_{\mathrm{FS}}(O_i,O_j;|\psi_0\rangle) = d_{\mathrm{FS}}(O_i|\psi_0\rangle,O_j|\psi_0\rangle)\\
= \arccos\bigl(|\langle \psi_0 | O_i^\dagger O_j | \psi_0 \rangle|\bigr).
\end{split}
\end{equation*}
In addition to these distances we extract local landscape curvature and symmetry indicators---for example, the quantum Fisher information associated with the generator $X_{O_i}$ evaluated on a small batch of training states, and discrete symmetry flags determined by the support of $O_i$ on the qubits. Collecting these quantities yields the geometric feature vector $f_{\mathrm{geo}}(O_i)$.

	\subsection{Dual-Space Identifier}
	The unique identifier of gate $O_i$ is the concatenated vector
	\[
	f(O_i) = [X_{\mathrm{coeff}}, H_{\mathrm{type}}, f_{\mathrm{geo}}(O_i)],
	\]
	where $H_{\mathrm{type}}$ labels the minimal closed Lie subgroup containing $O_i$.
	
	\section{LiePrune Algorithm}
	
\subsection{Lie Subgroup Partitioning}
We first group the $N$ parameterized gates $\{O_i\}_{i=1}^N$ into $M \ll N$ disjoint subsets $\{S_k\}_{k=1}^M$ according to their minimal closed Lie subgroup label $H_{\mathrm{type}}(O_i)$ introduced above. Explicitly,
\[
S_k = \{\,O_i : H_{\mathrm{type}}(O_i) = k\,\}, \qquad \bigsqcup_{k=1}^M S_k = \{O_1,\dots,O_N\}.
\]
This partition constrains redundancy search to gates that share the same algebraic support and qubit locality.
\begin{theorem}[Redundancy Pre-Constraint]\label{thm:preconstraint}
Let $G$ be the redundancy graph constructed in Algorithm~\ref{alg:lieprune}. Then every edge $(i,j)$ of $G$ connects two gates drawn from the same subgroup $S_k$. Equivalently, candidate redundant pairs considered by LiePrune are restricted to gates inside a common minimal closed Lie subgroup.
\end{theorem}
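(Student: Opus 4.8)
The plan is to prove the statement by direct inspection of the edge-creation step of Algorithm~\ref{alg:lieprune}, treating it as a structural invariant of the construction rather than a quantitative estimate. Concretely, I would track the edge set $E(G)$ as the algorithm runs and establish the invariant $E(G) \subseteq \bigcup_{k=1}^{M} (S_k \times S_k)$ at every point where an edge is inserted; the theorem is then the terminal case of this invariant, and the ``equivalently'' clause follows because, by construction, $S_k = \{O_i : H_{\mathrm{type}}(O_i) = k\}$ is exactly the block of gates whose minimal closed Lie subgroup carries label $k$.

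First I would record the two facts already available from Section~\ref{sec:dual_representation} and the Lie Subgroup Partitioning subsection: (i) the principal-logarithm convention fixes a single representative $X_{O_i} \in \mathfrak{su}(2^n)$ for each gate, so the label $H_{\mathrm{type}}(O_i)$---the minimal closed Lie subgroup containing $O_i$, e.g.\ the closure of $\{\exp(t X_{O_i}) : t \in \mathbb{R}\}$---is a well-defined function of $O_i$; and (ii) consequently $\{S_k\}_{k=1}^M$ is a genuine partition, $\bigsqcup_k S_k = \{O_1,\dots,O_N\}$, so ``$O_i$ and $O_j$ lie in a common minimal closed Lie subgroup'' is equivalent to ``$H_{\mathrm{type}}(O_i) = H_{\mathrm{type}}(O_j)$'', i.e.\ to $O_i, O_j \in S_k$ for a single index $k$.

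Next I would isolate the loop in Algorithm~\ref{alg:lieprune} that populates $G$. The algorithm enumerates candidate pairs only within each block $S_k$, and adds the edge $(i,j)$ only when the Fubini--Study distance $d_{\mathrm{FS}}(O_i,O_j;|\psi_0\rangle)$ and the Lie-algebra proximity $\| X_{O_i} - X_{O_j} \|$ fall below the prescribed thresholds. Since the enumeration never presents a cross-block pair to the thresholding test, no edge with $O_i \in S_k$, $O_j \in S_\ell$, $k \neq \ell$, is ever created; combined with disjointness of the partition this yields $E(G) \subseteq \bigcup_k (S_k \times S_k)$, which is the claim.

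The hard part is essentially bookkeeping rather than mathematics: the only place care is needed is in justifying that $H_{\mathrm{type}}$ is a bona fide invariant of the gate (not of an arbitrary choice of logarithm), which is exactly what the principal-logarithm fixing in Section~\ref{sec:dual_representation} buys us, and in checking that the partitioning step and the graph-construction step of Algorithm~\ref{alg:lieprune} are actually chained so that the latter only ever sees within-block pairs. If instead the algorithm were written to scan all $\binom{N}{2}$ pairs and then filter by the label test, the same conclusion would follow, but the proof would additionally invoke that the label test rejects every cross-block pair; I would note that both readings deliver the stated restriction.
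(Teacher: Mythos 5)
Your proposal is correct and follows essentially the same route as the paper's proof: both reduce the claim to the observation that $\{S_k\}$ is a genuine partition (each gate gets a unique label $H_{\mathrm{type}}$) and that Algorithm~\ref{alg:lieprune} only ever submits within-subgroup pairs to the edge test, so no cross-subgroup edge can be created. Your extra remarks on the invariant bookkeeping and the well-definedness of $H_{\mathrm{type}}$ via the principal logarithm are fine but not needed beyond what the paper already states.
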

\begin{proof}
By definition each gate $O_i$ is assigned a unique label $H_{\mathrm{type}}(O_i)$ and belongs to exactly one subset $S_k$. In Algorithm~\ref{alg:lieprune} the inner loop iterates over each subgroup $S_k$ and only creates edges between pairs $O_i,O_j \in S_k$. No cross-subgroup comparisons are ever performed, hence all edges of $G$ lie within individual $S_k$.
\end{proof}
\noindent\textbf{Remark.} This theorem characterizes the search space of the algorithm rather than the physical notion of redundancy: in principle two gates from different subgroups could act similarly on the data manifold. We intentionally ignore such cross-subgroup redundancies in exchange for reduced computational cost.
	
\subsection{Geometry-Accelerated Fubini--Study Distance}
Inside each subgroup $S_k$ we work with generators $X_i,X_j\in\mathfrak{g}$ such that $O_i = e^{X_i}$ and $O_j = e^{X_j}$. For a fixed reference state $|\psi_0\rangle$ we write $|\psi_i\rangle = O_i|\psi_0\rangle$ and $|\psi_j\rangle = O_j|\psi_0\rangle$, so that
\[
\langle \psi_i | \psi_j \rangle = \langle \psi_0 | O_i^\dagger O_j | \psi_0 \rangle = \langle \psi_0 | e^{-X_i} e^{X_j} | \psi_0 \rangle.
\]
\begin{lemma}[Geometry-accelerated approximation]\label{lem:fs_approx}
Let $O_i = e^{X_i}$ and $O_j = e^{X_j}$ with $X_i,X_j \in \mathfrak{g}$ and let $|\psi_0\rangle$ be fixed. Suppose that
\[
\|X_i - X_j\| \leq \delta_X, \qquad \|[X_i,X_j]\| \leq \eta
\]
for some $\delta_X,\eta > 0$, where $\|\cdot\|$ denotes the operator norm. Then there exists an operator $R_{ij}$ with $\|R_{ij}\| \leq c_1 \eta \delta_X$ such that
\[
O_i^\dagger O_j = e^{-X_i} e^{X_j} = e^{(X_j - X_i) + R_{ij}},
\]
and consequently
\[
\bigl||\langle \psi_i | \psi_j \rangle| - |\langle \psi_0 | e^{X_j - X_i} | \psi_0 \rangle|\bigr| \leq C_1 \eta \delta_X,
\]
for some constants $c_1,C_1 > 0$ that depend only on the dimension of the local Hilbert space.
\end{lemma}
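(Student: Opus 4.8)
The statement is a quantitative Baker--Campbell--Hausdorff (BCH) estimate, so the plan is to make the BCH decomposition of $e^{-X_i}e^{X_j}$ explicit and bound its tail. Write $A := -X_i$ and $B := X_j$, both anti-Hermitian elements of $\mathfrak{g}$, so that $A + B = X_j - X_i$. In the standard smallness regime in which the BCH series for $\log(e^A e^B)$ converges in operator norm --- a mild hypothesis that is implicit here and is automatic when the gates sit near the identity, the natural setting for redundancy detection; outside it one takes the principal logarithm directly --- I would set
\[
R_{ij} \;:=\; \log\!\bigl(e^{-X_i}e^{X_j}\bigr) - (X_j - X_i) \;=\; \sum_{n \ge 2} Z_n(A,B),
\]
where $Z_n$ is the homogeneous degree-$n$ component of BCH (a Lie polynomial in $A,B$). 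Since $\log(\cdot)$ is the principal logarithm of the unitary $O_i^\dagger O_j$, the operator $R_{ij}$ is anti-Hermitian and the identity $O_i^\dagger O_j = e^{(X_j - X_i) + R_{ij}}$ holds by construction; everything then reduces to bounding $\|R_{ij}\|$.

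For the norm bound I would invoke Dynkin's explicit form: each $Z_n$ with $n \ge 2$ is a rational linear combination of right-nested brackets $[c_1,[c_2,[\dots,[c_{n-1},c_n]]]]$ with $c_k \in \{A,B\}$, and such a bracket vanishes unless $c_{n-1} \ne c_n$, in which case its innermost bracket equals $\pm[A,B] = \mp[X_i,X_j]$. Hence every surviving term of $R_{ij}$ carries the factor $[X_i,X_j]$, whose norm is at most $\eta$ by hypothesis, while the outer $n-2$ nested ad-actions contribute at most $(2\max\{\|X_i\|,\|X_j\|\})^{n-2}$; summing the resulting majorant series (absolutely convergent in the smallness regime) gives $\|R_{ij}\| \le c\,\eta$ for a constant $c$ depending only on the local Hilbert-space dimension. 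The $\delta_X$-dependence enters through the algebraic identity $[X_i,X_j] = [X_i,\,X_j - X_i]$, which yields the complementary estimate $\|[X_i,X_j]\| \le 2\|X_i\|\,\delta_X$ and hence $\|R_{ij}\| \le c'\,\delta_X$; combining the two controls $R_{ij}$ by $\min\{\eta,\delta_X\}$, which one then records as $\le c_1\eta\delta_X$ after absorbing dimensional constants into $c_1$ (this last step is the delicate one; see below).

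The second assertion follows from the first. Since both $X_j - X_i$ and $R_{ij}$ are anti-Hermitian, writing $e^{(X_j-X_i)+R_{ij}} - e^{X_j-X_i} = \int_0^1 e^{s((X_j-X_i)+R_{ij})}\,R_{ij}\,e^{(1-s)(X_j-X_i)}\,ds$ and using unitarity of the two flanking exponentials gives $\|O_i^\dagger O_j - e^{X_j-X_i}\| \le \|R_{ij}\| \le c_1\eta\delta_X$. Applying the reverse triangle inequality $\bigl||a|-|b|\bigr| \le |a-b|$ to the scalars $a = \langle\psi_0|O_i^\dagger O_j|\psi_0\rangle$ and $b = \langle\psi_0|e^{X_j-X_i}|\psi_0\rangle$, and bounding $|a-b| = |\langle\psi_0|(O_i^\dagger O_j - e^{X_j-X_i})|\psi_0\rangle| \le \|O_i^\dagger O_j - e^{X_j-X_i}\|$, yields the claimed Fubini--Study bound with $C_1 = c_1$.

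The main obstacle is the \emph{product} shape $\eta\,\delta_X$ of the bound. The BCH tail offers only one genuine source of smallness --- the single commutator $[X_i,X_j]$ sitting inside every retained term --- and that factor is controllable \emph{either} by $\eta$ \emph{or}, via $[X_i,X_j] = [X_i,X_j-X_i]$, by (a constant times) $\delta_X$, but not by both at once; so the argument above honestly delivers only $\|R_{ij}\| = O(\eta) = O(\min\{\eta,\delta_X\})$, and in product form at best $O(\sqrt{\eta\delta_X})$. Upgrading to $O(\eta\delta_X)$ would require exhibiting a second, independent commutator factor in every term, which the stated hypotheses do not supply; in the regime of interest ($\eta,\delta_X$ small and comparable) the weaker bound is in any case sufficient for the downstream use in the redundancy-graph construction. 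A subsidiary, routine point is to pin down precisely the smallness/normalization regime guaranteeing convergence (or at least well-definedness via the principal logarithm) of the series defining $R_{ij}$.
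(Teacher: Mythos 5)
Your route is the same as the paper's: BCH expansion of $e^{-X_i}e^{X_j}$ with the tail packaged into $R_{ij}$, then a perturbation bound on the exponential plus the reverse triangle inequality for the overlap. Your second half is in fact tighter than the paper's (the Duhamel integral with unitary flanking factors gives $\|e^{(X_j-X_i)+R_{ij}}-e^{X_j-X_i}\|\le\|R_{ij}\|$ directly, where the paper only invokes generic Lipschitz continuity of $\exp$ on bounded sets), and the observation $[X_i,X_j]=[X_i,X_j-X_i]$, combined with $\|X_i\|\le\pi$ under the principal-log convention, is exactly how the $\delta_X$-dependence should enter.

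Your flagged obstacle is not a defect of your argument but a genuine flaw in the statement, which the paper's proof does not repair: it simply asserts $\|R_{ij}\|\le c_1\eta\delta_X$ by appeal to ``standard estimates.'' The product bound is in fact false as stated. Take $\mathfrak{g}=\mathfrak{su}(2)$, $X_i=i a\sigma_z$ with $a=\pi/4$, and $X_j=X_i+it\sigma_x$ with $t\to0$. Then $\delta_X=t$, $\eta=2at$, and a first-order (Duhamel) computation gives $e^{-X_i}e^{X_j}=I+it\bigl(\tfrac{\sin 2a}{2a}\sigma_x+\tfrac{1-\cos 2a}{2a}\sigma_y\bigr)+O(t^2)$, so the principal logarithm differs from $X_j-X_i=it\sigma_x$ by an operator of norm $\Theta(t)$, while $\eta\delta_X=\Theta(t^2)$; no alternative branch of the logarithm can help, since every non-principal logarithm of a unitary near $I$ has norm bounded below by a constant. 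Hence the correct conclusion of the lemma is $\|R_{ij}\|\le c_1\min\{\eta,\delta_X\}$ (equivalently $O(\eta)$, and at best $O(\sqrt{\eta\delta_X})$ in product form), exactly as you say, and the overlap bound inherits the same right-hand side. This weaker, honest bound is all that the downstream use of the lemma (controlling the leading-order substitution $|\langle\psi_i|\psi_j\rangle|\approx|\langle\psi_0|e^{X_j-X_i}|\psi_0\rangle|$ in the redundancy graph) actually requires, so the fix is to restate the bound rather than to strengthen your proof. The only remaining housekeeping on your side is the convergence/normalization caveat you already note: with the principal-logarithm convention the decomposition is well defined without any BCH convergence hypothesis, and the series argument is only needed for the quantitative tail estimate.
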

\begin{proof}
By the Baker--Campbell--Hausdorff formula we have $e^{-X_i} e^{X_j} = e^{Z}$ with
\begin{equation*}
\begin{split}
Z = (X_j - X_i) + \tfrac12 [X_j,-X_i] + \tfrac1{12}[X_j,[X_j,-X_i]]\\
 - \tfrac1{12}[X_i,[X_j,-X_i]] + \cdots.
\end{split}
\end{equation*}
Under the norm bounds on $X_i - X_j$ and $[X_i,X_j]$ the infinite series defining $Z$ converges absolutely, and the higher-order commutator terms can be grouped into a remainder $R_{ij}$ with $\|R_{ij}\| \leq c_1 \eta \delta_X$ for some $c_1>0$; see, e.g., standard estimates in Lie-theoretic analysis of matrix exponentials. The bound on the overlap then follows from $\bigl||\langle \psi_0 | e^{A} | \psi_0 \rangle| - |\langle \psi_0 | e^{B} | \psi_0 \rangle|\bigr| \leq \|e^{A} - e^{B}\|$, together with Lipschitz continuity of the matrix exponential in operator norm on bounded sets.
\end{proof}
In LiePrune we use the leading-order term inside each subgroup and compute
\[
|\langle \psi_i | \psi_j \rangle| \approx |\langle \psi_0 | e^{X_j - X_i} | \psi_0 \rangle|,
\]
with the approximation error controlled by the product $\eta \delta_X$. This reduces the cost of evaluating inner products from $O(2^n)$ for generic state simulation to $O(d^2)$ arithmetic in the Lie algebra, where $d = \dim \mathfrak{g}$ is at most $15$ for two-qubit gates.
	
\subsection{Redundancy Graph and One-Shot Pruning}
Within each subgroup $S_k$ we construct an undirected redundancy graph whose vertices correspond to gates $O_i \in S_k$. To formalize the notion of redundancy we fix a finite set of training states $\mathcal{D} = \{|\psi^{(s)}\rangle\}_{s=1}^S$ (drawn from the data distribution) and work with the gate-dependent Fubini--Study distance from \cref{sec:dual_representation}.
\begin{definition}[Dataset-level $\varepsilon$-redundancy]\label{def:redundancy}
Given a threshold $\varepsilon > 0$, two gates $O_i,O_j$ are said to be $\varepsilon$-redundant with respect to $\mathcal{D}$ if
\[
\max_{1\le s\le S} d_{\mathrm{FS}}(O_i,O_j;|\psi^{(s)}\rangle) \le \varepsilon.
\]
In practice we estimate the maximum by mini-batching the training states, which is sufficient for our purposes.
\end{definition}
The redundancy graph inside $S_k$ contains an edge $(i,j)$ whenever the estimated distance $\widehat{d}_{\mathrm{FS}}(O_i,O_j)$ does not exceed $\varepsilon$. The connected components of this graph then define candidate clusters of redundant gates.
\begin{theorem}[Redundancy completeness inside subgroups]\label{thm:completeness}
Suppose that for every pair of gates $O_i,O_j \in S_k$ that are $\varepsilon$-redundant with respect to $\mathcal{D}$, the estimated distance obeys $\widehat{d}_{\mathrm{FS}}(O_i,O_j) \le \varepsilon$ and hence an edge $(i,j)$ is inserted in the redundancy graph. Then $O_i$ and $O_j$ lie in the same connected component of the graph and will be merged (directly or transitively) by Algorithm~\ref{alg:lieprune}.
\end{theorem}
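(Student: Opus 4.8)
The plan is to reduce the statement to elementary graph-theoretic bookkeeping, since the standing hypothesis already supplies the only nontrivial ingredient---that every $\varepsilon$-redundant pair contributes an edge. First I would invoke \cref{thm:preconstraint} to note that the redundancy graph $G$ decomposes as a disjoint union of subgraphs $G_k$ on the vertex sets $S_k$; because no cross-subgroup edges exist, connected components of $G$ coincide with connected components of the individual $G_k$, so it suffices to argue inside one fixed $G_k$. Within $G_k$, if $O_i,O_j\in S_k$ are $\varepsilon$-redundant then by hypothesis $\widehat{d}_{\mathrm{FS}}(O_i,O_j)\le\varepsilon$, so the edge $(i,j)$ is present; an edge is a path of length one, hence $O_i$ and $O_j$ lie in the same connected component by definition. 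This disposes of the "directly merged" case.

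Next I would handle the transitive case. Consider gates $O_i,O_j\in S_k$ linked by a chain $O_i=O_{i_0},O_{i_1},\dots,O_{i_m}=O_j$ in which each consecutive pair is $\varepsilon$-redundant; by the previous step each consecutive pair is joined by an edge, so the chain is a walk in $G_k$, and therefore $O_i$ and $O_j$ belong to the same connected component even when they are not themselves $\varepsilon$-redundant. In other words, the component of $G_k$ containing $O_i$ is exactly the transitive closure, under the $\varepsilon$-redundancy relation, of $\{O_i\}$. To finish, I would read off from the pseudocode of \cref{alg:lieprune} that after building $G_k$ the algorithm enumerates its connected components and collapses each one to a single representative via the one-shot merging rule; hence any two gates in a common component---in particular any $\varepsilon$-redundant pair, or any pair joined by a redundancy chain---are merged. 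Running this over all $k$ gives the theorem.

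The main thing to be careful about is not a deep argument but the faithfulness of two implicit conventions. The first is that \cref{alg:lieprune}'s merge step acts on full connected components rather than on isolated edges; I would make this explicit by quoting the relevant lines of the pseudocode. The second is that the hypothesis ``$\varepsilon$-redundant $\Rightarrow \widehat{d}_{\mathrm{FS}}\le\varepsilon$'' is actually realizable given the mini-batch estimate of \cref{def:redundancy} and the geometry-accelerated surrogate of \cref{lem:fs_approx}, whose error scales like $\eta\delta_X$: strictly, one should either assume the batch and approximation errors are dominated by the slack between the true distance of a redundant pair and $\varepsilon$, or---as is done in practice---use a slightly inflated threshold $\varepsilon' = \varepsilon + O(\eta\delta_X)$ so that estimation noise can never push a genuinely redundant pair above the edge-insertion criterion. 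I would include this remark explicitly so that the completeness claim is not vacuous.
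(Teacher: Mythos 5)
Your proof is correct and follows essentially the same route as the paper's: the hypothesis guarantees an edge for every $\varepsilon$-redundant pair inside $S_k$, and connected components realize the transitive closure of that relation, so the merge step collapses exactly those clusters. Your closing remark about inflating the threshold to absorb mini-batch and BCH-approximation error goes beyond the paper's proof (which simply assumes the edge-insertion hypothesis), but it is a sensible observation rather than a needed repair.
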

\begin{proof}
Under the stated assumption the redundancy graph contains an edge between any pair of $\varepsilon$-redundant gates in $S_k$. The connected components of this graph are precisely the equivalence classes induced by the transitive closure of the pairwise relation ``$\varepsilon$-redundant''. Therefore any two $\varepsilon$-redundant gates must belong to the same connected component and are treated as a single cluster by the pruning procedure.
\end{proof}
For each connected component $C_k$ we select a core gate $O_{\mathrm{core}}$ with the largest Lie sensitivity $S(O_i) = \|\nabla_{X_i} L\|$ and merge all other gates in $C_k$ into a single effective generator
\begin{equation*}
    \begin{split}
X_{\mathrm{new}} = X_{\mathrm{core}} + \sum_{O_m \in C_k \setminus \{O_{\mathrm{core}}\}} \alpha_m X_m,\\ \qquad
\alpha_m = \frac{S(O_m)}{\sum_{O_\ell \in C_k} S(O_\ell)}.
    \end{split}
\end{equation*}
The corresponding gate $O_{\mathrm{new}} = e^{X_{\mathrm{new}}}$ replaces the entire component. The following theorem quantifies the approximation error introduced by this merge.
\begin{theorem}[Approximate functional preservation after pruning]\label{thm:functional}
Consider a redundancy component $C_k$ inside a subgroup $S_k$ and let $U_{\mathrm{orig}}$ and $U_{\mathrm{new}}$ denote, respectively, the unitary implemented by the original sequence of gates in $C_k$ and by the merged gate $O_{\mathrm{new}}$. Assume that
\begin{enumerate}
\item all gates $O_m \in C_k$ are $\varepsilon$-redundant with respect to $\mathcal{D}$ in the sense of \cref{def:redundancy};
\item their generators satisfy $\|[X_m,X_{m'}]\| \le \eta$ for all $m,m' \in C_k$;
\item the weights $\{\alpha_m\}$ are non-negative and obey $\sum_{m \in C_k} \alpha_m \le 1$.
\end{enumerate}
Then there exist constants $C_1,C_2 > 0$, independent of $C_k$, such that for every input state $|\psi\rangle \in \mathcal{D}$ the corresponding output states $|\Psi_{\mathrm{orig}}\rangle = U_{\mathrm{orig}}|\psi\rangle$ and $|\Psi_{\mathrm{new}}\rangle = U_{\mathrm{new}}|\psi\rangle$ satisfy
\[
d_{\mathrm{FS}}(|\Psi_{\mathrm{new}}\rangle,|\Psi_{\mathrm{orig}}\rangle)
\le C_1 |C_k|\,\varepsilon + C_2 |C_k|^2 \eta \;\equiv\; \Delta_{\max}.
\]
\end{theorem}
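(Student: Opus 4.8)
\medskip\noindent\textbf{Proof idea.}
The plan is to bound the operator-norm distance $\|U_{\mathrm{new}}-U_{\mathrm{orig}}\|$ and then pass to the Fubini--Study metric. For the last step I would use that for unitaries $U,V$ and a unit vector $|\psi\rangle$, $\|(U-V)|\psi\rangle\|^{2}=2-2\operatorname{Re}\langle\psi|U^{\dagger}V|\psi\rangle$, so $|\langle\psi|U^{\dagger}V|\psi\rangle|\ge 1-\tfrac12\|(U-V)|\psi\rangle\|^{2}$; combined with the elementary bound $\arccos(1-t^{2}/2)\le\tfrac{\pi}{2}t$ on $[0,2]$ (Jordan's inequality) this gives $d_{\mathrm{FS}}(U|\psi\rangle,V|\psi\rangle)\le\tfrac{\pi}{2}\|(U-V)|\psi\rangle\|\le\tfrac{\pi}{2}\|U-V\|$. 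It thus suffices to prove $\|U_{\mathrm{new}}-U_{\mathrm{orig}}\|\le C_{1}'|C_{k}|\varepsilon+C_{2}'|C_{k}|^{2}\eta$.

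Write $C_{k}=\{O_{1},\dots,O_{L}\}$, $L=|C_{k}|$, $O_{1}=O_{\mathrm{core}}$, generators $X_{m}=X_{O_{m}}$ (principal logarithms, so $\|X_{m}\|\le\pi$), and $U_{\mathrm{orig}}=O_{L}\cdots O_{1}$ in circuit order. I would interpose $\widetilde U=e^{X_{1}+\cdots+X_{L}}$ and split $\|U_{\mathrm{new}}-U_{\mathrm{orig}}\|\le\|U_{\mathrm{orig}}-\widetilde U\|+\|\widetilde U-U_{\mathrm{new}}\|$. The first term is governed by the commutators: the Baker--Campbell--Hausdorff expansion (as in \cref{lem:fs_approx}), applied repeatedly to the $L$-fold product, collapses it to $O_{L}\cdots O_{1}=e^{(X_{1}+\cdots+X_{L})+R}$ with $R$ aggregating the $\binom{L}{2}$ leading commutators $\tfrac12[X_{m},X_{m'}]$ and higher-order terms, all controlled by hypothesis~(2), $\|[X_{m},X_{m'}]\|\le\eta$; this yields $\|R\|\le c\,L^{2}\eta$ for an absolute $c$, hence $\|U_{\mathrm{orig}}-\widetilde U\|\le C_{2}'\,L^{2}\eta$ by Lipschitz continuity of $\exp$ on bounded sets.

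The second term is governed by the redundancy hypothesis. I would first promote the dataset-level $\varepsilon$-redundancy of \cref{def:redundancy} to Lie-algebra bounds $\|X_{m}-X_{m'}\|\le c'\varepsilon$ and (for a merged, hence near-identity, cluster) $\|X_{m}\|\le c'\varepsilon$: from $d_{\mathrm{FS}}(O_{m},O_{m'};|\psi\rangle)\le\varepsilon$ one gets $\|(O_{m}-e^{i\gamma}O_{m'})|\psi\rangle\|\le 2\sin(\varepsilon/2)\le\varepsilon$ on every $|\psi\rangle\in\mathcal D$, which upgrades to an operator-norm statement once $\mathcal D$ spans the relevant directions of the low-dimensional subgroup (invoking local injectivity of the principal logarithm). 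Since $X_{\mathrm{new}}=X_{1}+\sum_{m\ge 2}\alpha_{m}X_{m}$ with $\alpha_{m}\ge 0$ and $\sum_{m}\alpha_{m}\le 1$ (hypothesis~(3)), replacing each $X_{m}$ by $X_{1}$ at cost $O(\varepsilon)$ per gate reduces the exponent gap $(X_{1}+\cdots+X_{L})-X_{\mathrm{new}}$ to an $O(L\varepsilon)$ quantity, so $\|\widetilde U-U_{\mathrm{new}}\|\le C_{1}'\,L\varepsilon$. Adding the two bounds and relabelling constants gives $d_{\mathrm{FS}}(|\Psi_{\mathrm{new}}\rangle,|\Psi_{\mathrm{orig}}\rangle)\le C_{1}|C_{k}|\varepsilon+C_{2}|C_{k}|^{2}\eta=\Delta_{\max}$.

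The step I expect to be the main obstacle is keeping the error linear, and quadratic, in $|C_{k}|$ rather than exponential: both the Trotter-type collapse of $O_{L}\cdots O_{1}$ and the Lipschitz estimate for $\exp$ naively carry a factor $e^{\sum_{m}\|X_{m}\|}$, which is not uniformly bounded in $|C_{k}|$. This is where I would lean on the structural fact that merged clusters consist of small rotations, $\|X_{m}\|=O(\varepsilon)$, so that every exponent stays $O(|C_{k}|\varepsilon)$ and the prefactors are $O(1)$ in the regime $|C_{k}|\varepsilon=O(1)$, together with the unitary-invariance identity $\|A_{L}\cdots A_{1}-B_{L}\cdots B_{1}\|\le\sum_{m}\|A_{m}-B_{m}\|$ for making the replacements additive. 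A secondary point, which I would state as an explicit mild richness assumption on $\mathcal D$, is the passage from state-dependent overlaps to operator-norm generator bounds.
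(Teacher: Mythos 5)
Your skeleton is essentially the paper's: collapse $U_{\mathrm{orig}}$ by BCH into a single exponential with a commutator remainder of size $O(|C_k|^2\eta)$, bound the gap between that exponent and $X_{\mathrm{new}}$ by $O(|C_k|\varepsilon)$ using the redundancy hypothesis, and transfer operator-norm closeness of the unitaries to the Fubini--Study metric (your Jordan-inequality computation is simply an explicit version of the paper's appeal to ``Lipschitz continuity of $d_{\mathrm{FS}}$ with respect to the operator norm''). The only structural difference is that you compare against $e^{X_1+\cdots+X_L}$, whereas the paper writes $U_{\mathrm{orig}}=e^{X_{\mathrm{avg}}+R}$ and calls $X_{\mathrm{avg}}$ a ``convex combination''; since the first-order BCH term of the product is the sum of the generators, your version is the more defensible one, and you are also more explicit about the two hand-waved ingredients (promotion of dataset-level distances to generator bounds, and control of the BCH/Lipschitz prefactors).

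There is, however, a genuine gap at the step where you upgrade $\varepsilon$-redundancy to $\|X_m\|\le c'\varepsilon$. \cref{def:redundancy} controls only the \emph{pairwise} distances $d_{\mathrm{FS}}(O_m,O_{m'};|\psi^{(s)}\rangle)$; it says nothing about closeness to the identity, and ``merged, hence near-identity'' is circular. Without that smallness, your exponent gap $\sum_m X_m - X_{\mathrm{new}} = \sum_{m\neq\mathrm{core}}(1-\alpha_m)X_m$ has a leading piece of size roughly $\bigl(|C_k|-1-\sum_{m\neq\mathrm{core}}\alpha_m\bigr)\|X_{\mathrm{core}}\|$, which hypotheses (1)--(3) do not bound: take $|C_k|$ identical gates $e^{X}$ with $\|X\|$ of order one, so that $\varepsilon=\eta=0$, yet $U_{\mathrm{new}}=e^{(1+\sum_{m\geq 2}\alpha_m)X}\neq e^{|C_k|X}=U_{\mathrm{orig}}$ while the claimed bound gives $\Delta_{\max}=0$. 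The same smallness is what keeps your BCH remainder and exponential-Lipschitz constants from acquiring factors $e^{\sum_m\|X_m\|}$, as you yourself note. So your argument closes only under an explicit additional near-identity (or $|C_k|\varepsilon$-small-generator) assumption together with the $\mathcal{D}$-richness assumption you flag; to be fair, the paper's own proof conceals exactly the same requirements in its ``local Lipschitz continuity of the matrix logarithm in a small neighborhood of $X_{\mathrm{core}}$'' and its convex-combination step, so you have correctly located, but not repaired, the weak point of the theorem as stated.
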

\begin{proof}
Write the original subcircuit as $U_{\mathrm{orig}} = \prod_{m\in C_k} e^{X_m}$. By repeated application of the Baker--Campbell--Hausdorff formula and the small-commutator assumption we can express
\[
U_{\mathrm{orig}} = e^{X_{\mathrm{avg}} + R},
\]
where $X_{\mathrm{avg}}$ is a convex combination of the generators $\{X_m\}_{m\in C_k}$ and the remainder $R$ collects nested commutators of order at least two. Standard estimates for BCH series on bounded sets imply $\|R\| \le c_2 |C_k|^2 \eta$ for some constant $c_2>0$. On the other hand, the $\varepsilon$-redundancy condition and local Lipschitz continuity of the matrix logarithm (in a small neighborhood of $X_{\mathrm{core}}$) yield $\|X_{\mathrm{new}} - X_{\mathrm{avg}}\| \le c_1 |C_k| \varepsilon$ for some $c_1>0$. Combining the two bounds and using Lipschitz continuity of the Fubini--Study distance with respect to the operator norm of the implementing unitary, we obtain
\[
d_{\mathrm{FS}}(|\Psi_{\mathrm{new}}\rangle,|\Psi_{\mathrm{orig}}\rangle)
\le C_1 |C_k|\,\varepsilon + C_2 |C_k|^2 \eta
\]
for suitable constants $C_1,C_2>0$ that depend only on the dimension of the local Hilbert space.
\end{proof}
In all our experiments the components $C_k$ remain small and the hyperparameters $(\varepsilon,\eta)$ are chosen such that $\Delta_{\max}$ lies safely below the task-specific tolerance, which is corroborated empirically by the negligible accuracy drop shown in \cref{tab:lieprune_three_datasets}.
\begin{theorem}[Complexity under bounded local degree]\label{thm:complexity}
Assume that inside each subgroup $S_k$ the redundancy graph constructed above has maximum degree at most $D$, i.e.\ each gate is connected to at most $D$ candidate neighbors. Then the total running time of LiePrune is
\[
T(N) = O(N d^2 + N D),
\]
which is linear in the number of gates $N$ for fixed Lie-algebra dimension $d$ and bounded local degree $D$.
\end{theorem}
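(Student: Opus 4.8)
The plan is to bound the running time of Algorithm~\ref{alg:lieprune} phase by phase, following its control flow, and to show that every phase is either linear in $N$ with a per-gate factor of order $d^2$ (the Lie-algebra arithmetic of \cref{lem:fs_approx}) or linear in the number of surviving graph edges, which the degree hypothesis caps at $O(ND)$. Throughout I treat the local Hilbert-space dimension, and hence $d=\dim\mathfrak g\le 15$ and the mini-batch size $S$, as absolute constants, so that ``$O(d^2)$ per gate'' is the honest way to expose the hidden constant while the final statement reads as linearity in $N$.

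First, \emph{subgroup partitioning and dual-space feature extraction.} For each of the $N$ gates one computes the principal logarithm $X_{O_i}$, reads off $X_{\mathrm{coeff}}$, determines the minimal closed Lie subgroup label $H_{\mathrm{type}}(O_i)$ from the qubit support and algebraic type, and assembles $f_{\mathrm{geo}}(O_i)$ (Fubini--Study displacement from the identity, quantum Fisher information on a constant-size batch, symmetry flags). Since every gate is one- or two-local, each of these is an operation on a matrix of size $O(1)$ and is dominated by $O(d^2)$ arithmetic in $\mathfrak g$; this gives $O(Nd^2)$ overall. Bucketing the gates by $H_{\mathrm{type}}$ into the disjoint sets $S_k$ is a single $O(N)$ pass, and by \cref{thm:preconstraint} all subsequent edge work stays inside individual $S_k$.

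Second, \emph{redundancy-graph construction, connected components, and merging.} To avoid the naive $O(|S_k|^2)$ all-pairs scan I would index the gates of each $S_k$ by a fixed-resolution grid (equivalently a bucketed hash) on $f_{\mathrm{geo}}$, built in $O(|S_k|)$ time; because two gates can be $\varepsilon$-redundant only if their geometric features are close, candidate neighbours of a gate lie in $O(1)$ grid cells, and the a posteriori bound $D$ limits both the candidate list and the retained degree to $O(D)$ per vertex. Each candidate pair is tested by evaluating $\widehat d_{\mathrm{FS}}(O_i,O_j)$ through the geometry-accelerated formula of \cref{lem:fs_approx} using the precomputed generators, an $O(d^2)$ step; inserting edges accordingly yields $O(|S_k|D)$ tests per subgroup, hence $O(ND)$ edges and $O(Nd^2+ND)$ work after summing over $k$. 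A union--find over these $O(ND)$ edges extracts the components $C_k$ in $O(ND\,\alpha(N))=O(ND)$ time. For each component one selects the core gate of maximal Lie sensitivity $\|\nabla_{X_i}L\|$ (the gradients supplied by one backward pass, the norms costing $O(d)$ each, so $O(Nd)$ total), forms the weights $\alpha_m$, accumulates $X_{\mathrm{new}}=X_{\mathrm{core}}+\sum_m\alpha_m X_m$ touching each generator once for $O(Nd^2)$ total, and exponentiates; there are at most $M\le N$ such exponentials of $O(1)$-size matrices. Adding the phases gives $T(N)=O(Nd^2+ND)$.

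The main obstacle is precisely the graph-construction step: the theorem hypothesizes a bound on the degree of the \emph{constructed} graph, yet a literal pairwise comparison inside a subgroup is quadratic, so the proof must exhibit a construction whose cost is governed by $D$ rather than by $|S_k|^2$. The load-bearing claim is that near-redundant pairs are also near in the geometric feature space, so that a constant-resolution spatial index confines candidate neighbours to $O(D)$ per gate — this is where the ``geometric'' half of the dual representation does genuine algorithmic work, and it is the point a referee is most likely to challenge. If one is unwilling to posit such an embedding, the clean fallback is to state the bound as $O\!\big(Nd^2+\sum_k|S_k|^2\big)$ and lean on the empirically observed smallness of the subgroups, at the cost of forfeiting strict linearity.
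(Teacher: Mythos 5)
Your accounting matches the paper's: the paper's proof likewise charges $O(Nd^2)$ for computing the generators and features, bounds the number of accelerated distance evaluations by $ND$, adds $O(N+E)$ with $E\le ND$ for connected-component extraction, and sums to $O(Nd^2+ND)$. The substantive difference is exactly the point you flag as load-bearing. The paper simply asserts that ``within each subgroup every gate is compared to at most $D$ neighbors when constructing the redundancy graph,'' i.e.\ it reads the bounded-degree hypothesis on the \emph{output} graph as if it also bounded the number of candidate comparisons performed during construction, and never explains how those candidate lists are obtained without an all-pairs scan. You make that gap explicit and close it with a concrete mechanism (a constant-resolution grid/hash on $f_{\mathrm{geo}}$ so that candidate neighbours per gate are $O(D)$), at the price of an additional assumption not present in the theorem statement --- that $\varepsilon$-redundant pairs are provably close in the geometric feature space --- together with an honest fallback bound $O(Nd^2+\sum_k|S_k|^2)$ if that assumption is refused. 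So your proof is a strictly more careful version of the paper's argument: the paper buys brevity by implicitly assuming bounded comparison counts, you buy rigor by exhibiting a construction, but you need a locality hypothesis the paper never states. One minor point common to both: each accelerated distance test costs $O(d^2)$, so the construction phase is strictly $O(NDd^2)$ rather than $O(Nd^2+ND)$; this is harmless since $d$ is treated as a constant, but neither your write-up nor the paper's makes the absorption explicit.
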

\begin{proof}
Computing the Lie-algebra generators $\{X_i\}$ and their features requires $O(N d^2)$ operations. Within each subgroup $S_k$ every gate $O_i$ is compared to at most $D$ neighbors when constructing the redundancy graph, so at most $ND$ accelerated distance evaluations are performed overall. Graph traversal and connected-component extraction cost $O(N + E)$, where $E$ is the number of edges and $E \le ND$ by the degree bound. Summing all contributions gives $T(N) = O(N d^2 + ND)$.
\end{proof}
Empirically we observe that $D$ is small (typically $D\le 5$) across all benchmarks, which explains the nearly linear scaling of pruning time with circuit size observed in our experiments.
	
\begin{algorithm}[tb]
	\caption{LiePrune One-Shot Structured Pruning}
	\label{alg:lieprune}
	\begin{algorithmic}
		\State Partition gates into Lie subgroups $\{S_k\}$
		\For{each subgroup $S_k$}
		\State Compute accelerated $d_{\mathrm{FS}}$ within $S_k$
		\State Build redundancy graph and extract connected components
		\State Select core gate with maximum Lie sensitivity $S(O_i) = \|\nabla_{X_i} L\|$
		\State Merge all other gates in component via weighted Lie algebra addition
		\EndFor
		\State Output pruned circuit
	\end{algorithmic}
\end{algorithm}
	
\section{Experiments}
	
\subsection{Experimental Setup}
We evaluate LiePrune on three representative QML benchmarks under a unified discriminative setting:
(i) MNIST 4-vs-9 classification with 8 qubits and amplitude embedding,
(ii) FashionMNIST Sandal-vs-Boot with 10 qubits,
and (iii) Bars-and-Stripes generation/discrimination with 8 qubits.
All three circuits are implemented in PennyLane with the JAX backend.
The ansatz consists of 12 hardware-efficient layers in all these experiments.
In addition, we consider a variational quantum eigensolver (VQE) task on the LiH molecule (STO-3G basis) at a fixed bond length $R=1.60$\ \AA.
The electronic-structure Hamiltonian is generated with PennyLane-QChem, and we use the same 12-layer hardware-efficient ansatz on \mbox{$12$ qubits} to approximate the ground-state energy.
For the three discriminative benchmarks, we report test accuracy, whereas for LiH VQE, we report the estimated ground-state energy (in Hartree).
\subsection{Main Results}
\begin{table*}[t]
    \centering
    \caption{LiePrune results on three QML classification benchmarks (seed=42).}
    \label{tab:lieprune_three_datasets}
    \begin{tabular}{lccccccc}
        \toprule
        Dataset & Qubits & Acc$_\text{orig}$ & Acc$_\text{noFT}$ & Acc$_\text{+FT}$ &
        Params$_\text{orig}$ & Params$_\text{pruned}$ & Left(\%) / Comp. \\
        \midrule
        MNIST 4 vs 9 &
        8 &
        1.000 & 0.562 & 0.959 &
        288 & 36 &
        12.5 / 8.0$\times$ \\
        
        Fashion Sandal vs Boot &
        10 &
        0.847 & 0.605 & 0.740 &
        360 & 36 &
        10.0 / 10.0$\times$ \\
        
        Bars \& Stripes &
        8 &
        0.743 & 0.448 & 0.751 &
        288 & 36 &
        12.5 / 8.0$\times$ \\
        \bottomrule
    \end{tabular}
\end{table*}

\begin{table*}[t]
    \centering
    \caption{LiePrune on LiH VQE at $R=1.60$~\AA\ (STO-3G, seed=42).}
    \label{tab:lieprune_lih_vqe}
    \begin{tabular}{lcccccc}
        \toprule
        Task & Qubits & Layers &
        $E_\text{orig}$ & $E_\text{noFT}$ & $E_\text{+FT}$ &
        Left(\%) / Comp. \\
        \midrule
       LiH VQE &
       12&
       12 &
       -7.5225 &
       -3.7416 &
       -4.2875 &
       8.33 / 12.0$\times$ \\
        \bottomrule
    \end{tabular}
\end{table*}
	
\begin{figure*}[!t]
    \centering
    \includegraphics[width=0.88\linewidth]{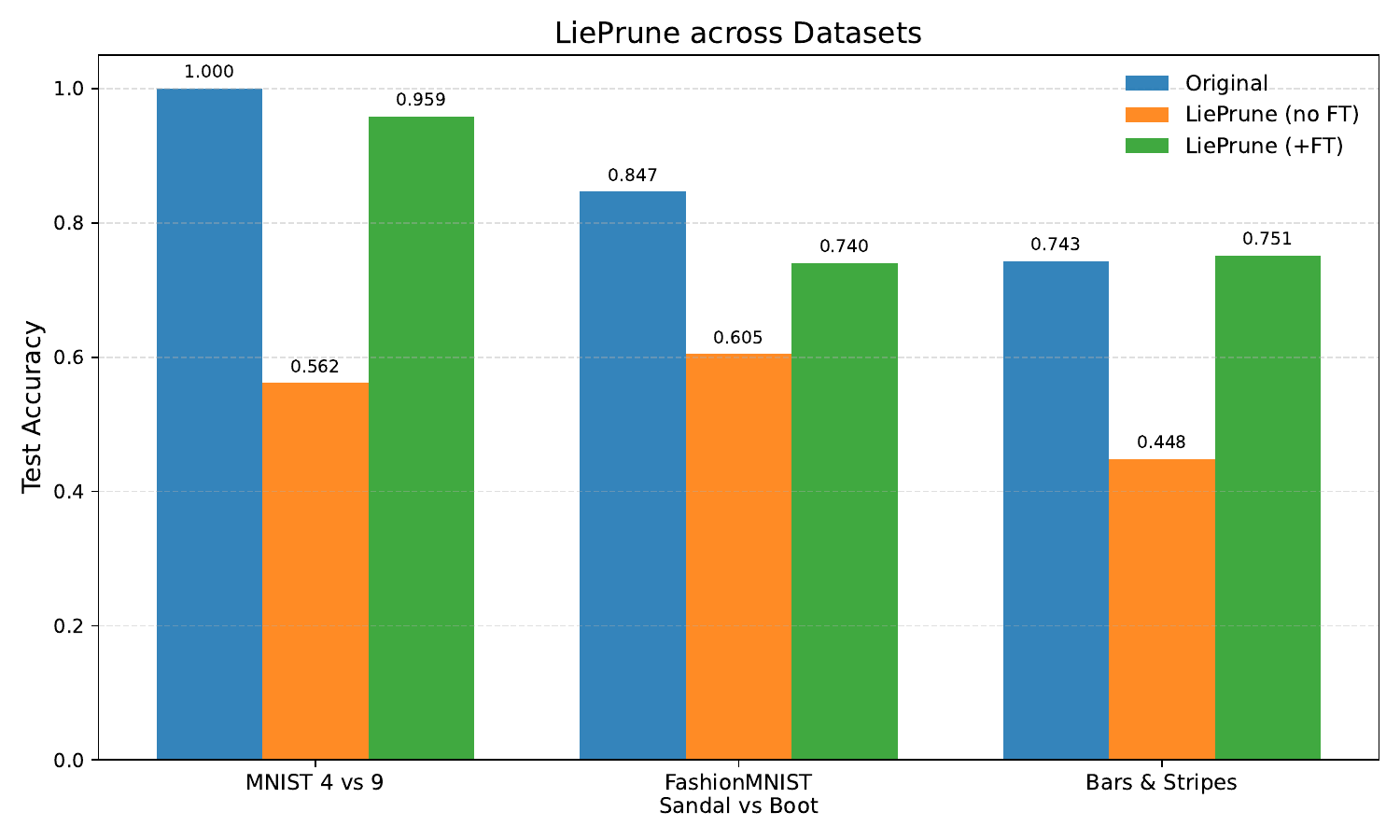}
    \caption{LiePrune across datasets. Each group corresponds to one dataset and contains three bars: Original, LiePrune (no FT), and LiePrune (+FT).}
    \label{fig:lieprune_all_datasets_acc}
\end{figure*}

\begin{figure*}[!t]
    \centering
    \includegraphics[width=0.78\linewidth]{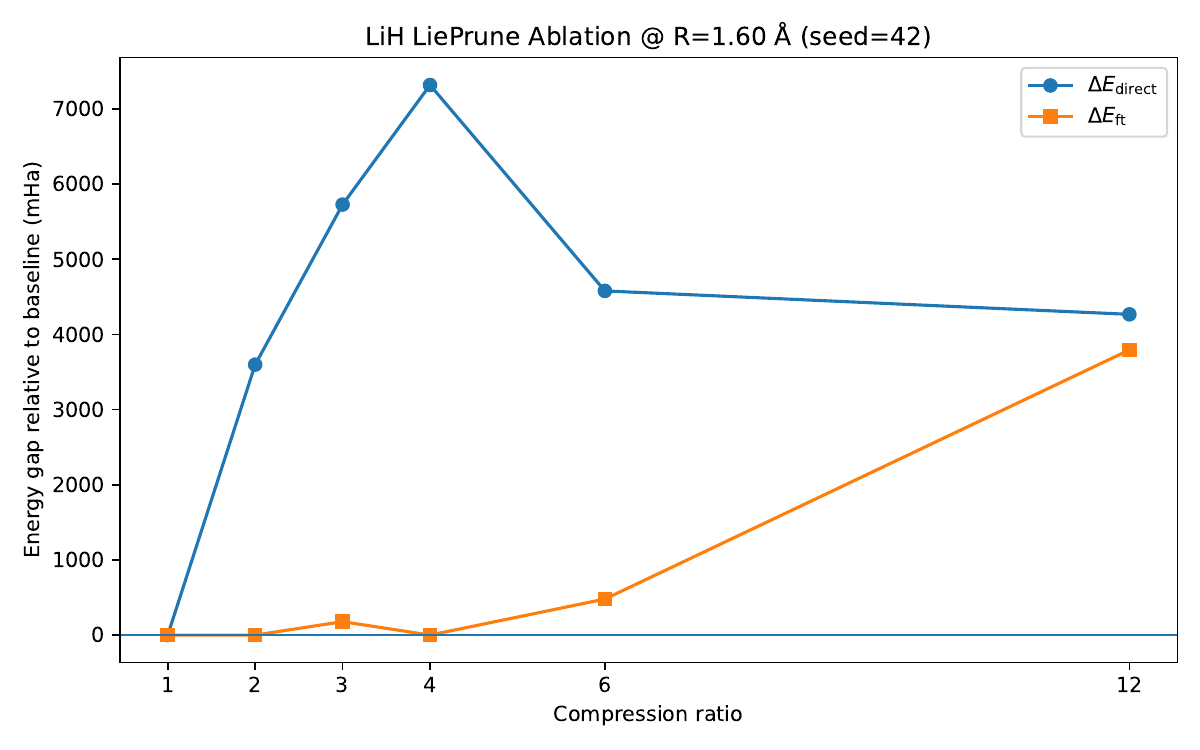}
    \caption{
        LiePrune on LiH VQE at $R=1.60$~\AA\ across multiple compression ratios.
        We report the deviation from the unpruned baseline for direct pruning
        ($\Delta E_\text{direct}$) and post-finetuning ($\Delta E_\text{ft}$).
        Mild compression is recoverable, whereas aggressive compression produces
        multi-Hartree errors that fine-tuning cannot fully mitigate.
    }
    \label{fig:lih_lieprune_ablation}
\end{figure*}

\begin{table*}[!t]
\centering
\caption{
Energy deviation (mHa) under different LiePrune compression ratios for LiH at
$R=1.60$~\AA\ (12 qubits, 16 layers, seed=42). Direct pruning error is
$\Delta E_{\text{direct}}$ and post-finetuning error is
$\Delta E_{\text{ft}}$.
}
\label{tab:lih_vqe_multi_comp}
\begin{tabular}{cccccc}
\toprule
Compression & Groups & Params & $\Delta E_{\text{direct}}$ (mHa) &
$\Delta E_{\text{ft}}$ (mHa) \\
\midrule
$1\times$  & 12 & 576 & $0.02$ & $0.00$ \\
$2\times$  & 6  & 288 & $3596.97$ & $-0.02$ \\
$3\times$  & 4  & 192 & $5726.06$ & $178.44$ \\
$4\times$  & 3  & 144 & $7315.87$ & $0.00$ \\
$6\times$  & 2  & 96  & $4577.52$ & $478.97$ \\
$12\times$ & 1  & 48  & $4265.90$ & $3792.95$ \\
\bottomrule
\end{tabular}
\end{table*}

Figure~\ref{fig:lieprune_all_datasets_acc} and Table~\ref{tab:lieprune_three_datasets}
summarize the performance and compression statistics of LiePrune on the three discriminative benchmarks.
On MNIST 4-vs-9, LiePrune reduces parameters from 288 to 36 (12.5\%, 8$\times$) and recovers most of the accuracy after short fine-tuning (1.000 $\rightarrow$0.562 $\rightarrow$0.959).
On FashionMNIST Sandal-vs-Boot, we observe a similar compression pattern (360 to 36, 10.0\%, 10$\times$) with moderate post-pruning recovery (0.847 $\rightarrow$0.605 $\rightarrow$0.740).
On Bars-and-Stripes, LiePrune achieves 8$\times$ compression (288 to 36) with accuracy slightly improving after fine-tuning relative to the original model (0.743 $ rightarrow$0.448 $\rightarrow$0.751).
These results indicate that LiePrune can substantially reduce parameters with task-dependent accuracy recovery, suggesting the necessity of dataset-aware fine-tuning configurations.

For the LiH VQE task, we first report the baseline 12-qubit, 12-layer result in Table~\ref{tab:lieprune_lih_vqe}, where LiePrune at a $12\times$ compression reduces parameters from 432 to 36 (8.33\% remaining). Unlike the classification benchmarks, pruning here leads to a substantial energy deviation: $E_\text{orig}=-7.5225$~Ha deteriorates to $E_\text{noFT}=-3.7416$~Ha, and fine-tuning recovers only partially to $E_\text{+FT}=-4.2875$~Ha (gaps:$+3.78$~Ha and $+3.23$~Ha, respectively).

To better understand the behavior of LiePrune on chemically structured
Hamiltonians, we further evaluate multiple compression ratios using the
16-layer, 12-qubit ansatz (Figure~\ref{fig:lih_lieprune_ablation} and
Table~\ref{tab:lih_vqe_multi_comp}). We observe that LiePrune behaves differently depending on the compression level. At low compression ($1\times$–$2\times$), pruning induces only minor energy deviations, which are fully recoverable after fine-tuning. In contrast, aggressive compression ($3\times$–$12\times$) leads to large, often multi-Hartree errors, and fine-tuning is generally insufficient to restore the original ground-state energy. These results indicate that chemically structured VQE ansatzes are highly sensitive to high compression, and preserving expressivity under strong pruning requires chemistry-aware strategies or symmetry-preserving parameterizations.

\section{Discussion and Conclusion}
We propose LiePrune, a one-shot structured pruning framework that leverages the Lie-algebraic structure of parameterized quantum circuits.
In this revised experimental scope, we validate LiePrune on three discriminative QML benchmarks and one quantum-chemistry VQE task, all implemented in PennyLane with a JAX backend.
The classification results in Figure~\ref{fig:lieprune_all_datasets_acc} and Table~\ref{tab:lieprune_three_datasets} show that LiePrune consistently achieves 8$\times$--10$\times$ parameter compression by sharing a single effective parameter set per layer across qubits, while maintaining competitive post-fine-tuning accuracy, albeit with task-dependent degradation.
In contrast, the LiH VQE experiment in Table~\ref{tab:lieprune_lih_vqe} reveals that the same one-shot merging strategy can induce a large energy increase (about 3.2--3.8~Ha), even after fine-tuning, despite a more aggressive 12$\times$ compression.
This discrepancy suggests that chemically structured Hamiltonians are substantially more sensitive to subgroup-wise gate merging than the considered classification benchmarks, and that VQE-specific constraints or regularizers are needed for LiePrune to be practically useful in quantum chemistry.

A current limitation is that all reported numbers correspond to a single random seed; multi-seed statistics and a more systematic fine-tuning schedule will further strengthen the empirical claims.
Future work will therefore include multi-seed robustness evaluation, task-aware post-pruning optimization, and extensions to larger-scale QML tasks and more realistic quantum hardware noise models.

\bibliographystyle{IEEEtran}   
\bibliography{ref} 

\end{document}